\newcommand{\beq}{\begin{equation}}
\newcommand{\eeq}{\end{equation}}
\newcommand{\bea}{\begin{eqnarray}}
\newcommand{\eea}{\end{eqnarray}}
\newtheorem{lemma}{Lemma}
\newtheorem{definition}{Property}
\begin{document}

\title{Active error correction for Abelian and non-Abelian anyons}
\author{James R.~Wootton and Adrian Hutter}
\affiliation{Department of Physics, University of Basel, Klingelbergstrasse 82, CH-4056 Basel, Switzerland}

\date{\today}

\begin{abstract}

We consider a class of decoding algorithms that are applicable to error correction for both Abelian and non-Abelian anyons. This class includes multiple algorithms that have recently attracted attention, including the Bravyi-Haah RG decoder. They are applied to both the problem of single shot error correction (with perfect syndrome measurements) and that of active error correction (with noisy syndrome measurements). For Abelian models we provide a threshold proof in both cases, showing that there is a finite noise threshold under which errors can be arbitrarily suppressed when any decoder in this class is used. For non-Abelian models such a proof is found for the single shot case. The means by which decoding may be performed for active error correction of non-Abelian anyons is studied in detail. Differences with the Abelian case are discussed.

\end{abstract}

\maketitle

\section{Introduction}

The possibility of using anyonic quasiparticles for quantum computation has inspired a great deal of research \cite{pachos:book}. This is due in part to the idea of `topological protection', which promises inherent fault-tolerance for anyonic systems. Nevertheless, this protection still comes at a price. Without active error correction \cite{wootton:error}, or additional passive protection \cite{ben:rev}, the fault tolerance will fail after a system size independent lifetime \cite{alicki:07,wootton:error,fabio:15}. 
Though one can hope to extend this through means such as lowering temperature, such an approach is not consistent with the scalability required for quantum computation. 
It is therefore important to study how error correction may be performed in anyonic systems.	

For Abelian anyons the problem of error correction has been, and continues to be, studied in great detail \cite{fowler,bravyi:14,anwar:14,hutter:improved}. Many good decoding algorithms are known, and proofs that these allow exponential suppression of logical errors below a finite noise threshold have been found in multiple cases \cite{bravyi:13,fowler:12}. For non-Abelian anyons, however, this study is in its infancy \cite{wootton:error,brell:14,hutter:improved,burton:15}. The only case considered so far is a `single shot' scenario. This assumes an initial burst of noise, with all measurements and manipulations performed perfectly thereafter. The more realistic problem of dealing with continuously occurring noise through active error correction has hardly been considered \cite{wootton:error}.

In this work we specifically consider a certain class of decoders. These can correspond to quite different methods, and yet have shared properties that allow them to be studied collectively. Examples of such decoders have recently been considered for multiple problems in Abelian and non-Abelian error correction \cite{sarvepalli:12,bravyi:13,wootton:simple,wootton:error,brell:14,anwar:14,hutter:improved,watson:15,brown:15} We provide a general proof of a finite noise threshold for these decoders, applicable to single shot error correction for Abelian and non-Abelian anyons, as well as active error for Abelian anyons.

For active error correction of non-Abelian anyons, we study the way in which syndrome measurements must be interpreted in order for the decoders to be applied. Differences between the Abelian and non-Abelian cases are found and discussed. Specifically, it is shown that these prevent the proof used for Abelian active correction from being adapted to the non-Abelian case.

\section{Definitions}

\subsection{Code and Syndrome Lattice}

The proof concerns error correcting codes defined on a two-dimensional lattice with quasilocal syndrome operators, such that their eigenspaces can be identified with anyonic occupations.

For concreteness we consider models based on a two-dimensional $L \times L$ lattice which we call the `code lattice'. Anyons are associated with plaquettes, $P$, and the errors that affect a pair of neighbouring plaquettes are associated with the edge between them. The errors for each edge are assumed to act independently for analytical convenience.

A model of this form can be constructed for any anyon model. This framework may therefore be used to study general properties of anyonic decoding, when there is no need to specify the actual physical system used. They have especially been used to construct toy models for non-Abelian anyons. \cite{brell:14,burton:15}. 

Error correction first requires the anyonic occupancy of the plaquettes to be measured. If the code is Abelian and the syndrome measurements are without noise, these results provide sufficient information for error correction to be performed. The input to the decoder in this case is therefore a two-dimensional syndrome composed of these measurement results. This is the single shot case for Abelian anyons.

For non-Abelian anyons the single shot case is more contrived. As well as perfect syndrome measurements, a lack of any noise while anyon fusions are performed must also be assumed. The problem therefore has little physical relevance, beyond providing a first glimpse into non-Abelian decoding. As for the Abelian case, the syndrome given to the decoder is two-dimensional \cite{wootton:error,brell:14,hutter:improved,burton:15}.

When measurement results are noisy, a single measurement of each plaquette is no longer sufficient for good error correction. Instead, each syndrome operator must be measured periodically. Let us use $T$ to denote the total number of measurement rounds. The measurement results at each time step can then be used to generate a three-dimensional syndrome, of size $L$ in each spacial direction and $T$ in the time direction. 

Let us now construct a lattice on which the syndrome can be analyzed, which we call the `syndrome lattice'. Consider the code lattice stacked upon itself $T$ times to form a three-dimensional structure. We then define a set of points labelled $(P,t)$ to lie directly between the copies of the plaquette $P$ at timeslices $t$ and $t+1$. These points are taken to be the vertices of the syndrome lattice. So-called `time-like' edges are placed between each pair of vertices $(P,t)$ and $(P,t+1)$.  `Space-like' edges are placed between each $(P,t)$ and $(P',t)$ for neighbouring plaquettes $P$ and $P'$. This generalizes a well-known procedure for surface codes \cite{dennis}.

A syndrome value is assigned to each vertex of the syndrome lattice. These values reflect the difference between the measured anyon occupancy for the plaquette at these times. The exact details of how this is done depends on whether the anyons are Abelian or non-Abelian, and so will be specified in their respective sections.

Changes in anyon occupancy, as detected by this syndrome, are caused by errors. An error on the code between times $t$ and $t+1$ that changes the anyon occupancies of $P$ and $P'$ is associated with the space-like edge between $(P,t)$ and $(P',t)$. A measurement error for a plaquette $P$ during the round $t$ is associated with the time-like edge between $(P,t-1)$ and $(P,t)$.

We assume a toric or planar variant of the topological codes, for which logical information is stored within the degenerate vacuum states of the anyons. For this case, the code distance is $L$. Our results also apply to other means of storing logical information, such as holes \cite{wootton:rev}, defects \cite{bombin:twist,wootton_family} or using non-Abelian anyons themselves \cite{pachos:book}. In these cases the code distance $L'<L$ reflects the distance between these structures. Our results apply straightforwardly to these cases, with the simple substitution $L \rightarrow L'$.

\subsection{Code Lattice for Quantum Double Models}

For concreteness let us consider the quantum double models \cite{double}, a specific class of topological codes based on explicit spin lattice models that can realize both Abelian and no-Abelian anyons. These models are based on a two-dimensional lattice, however, it does not correspond exactly to the code lattice as defined above. This is because syndrome operators are defined on both the vertices and the plaquettes.

For Abelian models, the set of anyons living on plaquettes and vertices are independent of one another in terms of their creation and fusion. They may therefore be decoded independently. One could therefore consider a two independent code lattices: one for which the plaquettes correspond to quantum double plaquettes, and one for which they correspond to quantum double vertices.

This property does not hold for general quantum double models, though. In the non-Abelian case, it is possible for plaquette anyons to fuse to vertex ones. They are therefore no longer independent. We must therefore reinterpret these models in order to find a code lattice in the simple form we desire.

Let us consider a quantum double model defined on a square lattice. A spin is associated with each edge of this lattice. To each plaquette, $p$, we assign the vertex $v$ to its top right. This results in six spins around each combined $(p,v)$. A hexagonal lattice can then be drawn such that these spins lie on the vertices. This will be used as the code lattice. Each combined $(p,v)$ from the original lattice then corresponds to a single plaquette, $P$, in the code lattice. Each plaquette, $P$, has both kinds of stabilizer operator associated with it, and so can hold all possible kinds of anyon in the model. This lattice is shown in Fig. \ref{fig:qd}.

Note that each pair of neighbouring plaquettes, $P$ and $P'$, share two spins. For one of these, the only errors that would affect $P$ and $P'$ are those that affect flux anyons. For the other, only the errors for charge anyons affect $P$ and $P'$. When using the code lattice, errors are associated with edges rather than vertices. These errors from different spins are therefore associated with the single edge that lies between $P$ and $P'$. The independence of errors on each edge of the code lattice therefore requires not only that errors on each spin are independent of each other, but also independence of flux and charge errors on the same spin. We therefore assume such noise when considering quantum double models.

\begin{figure}[t]
\begin{center}
{\includegraphics[width=8cm]{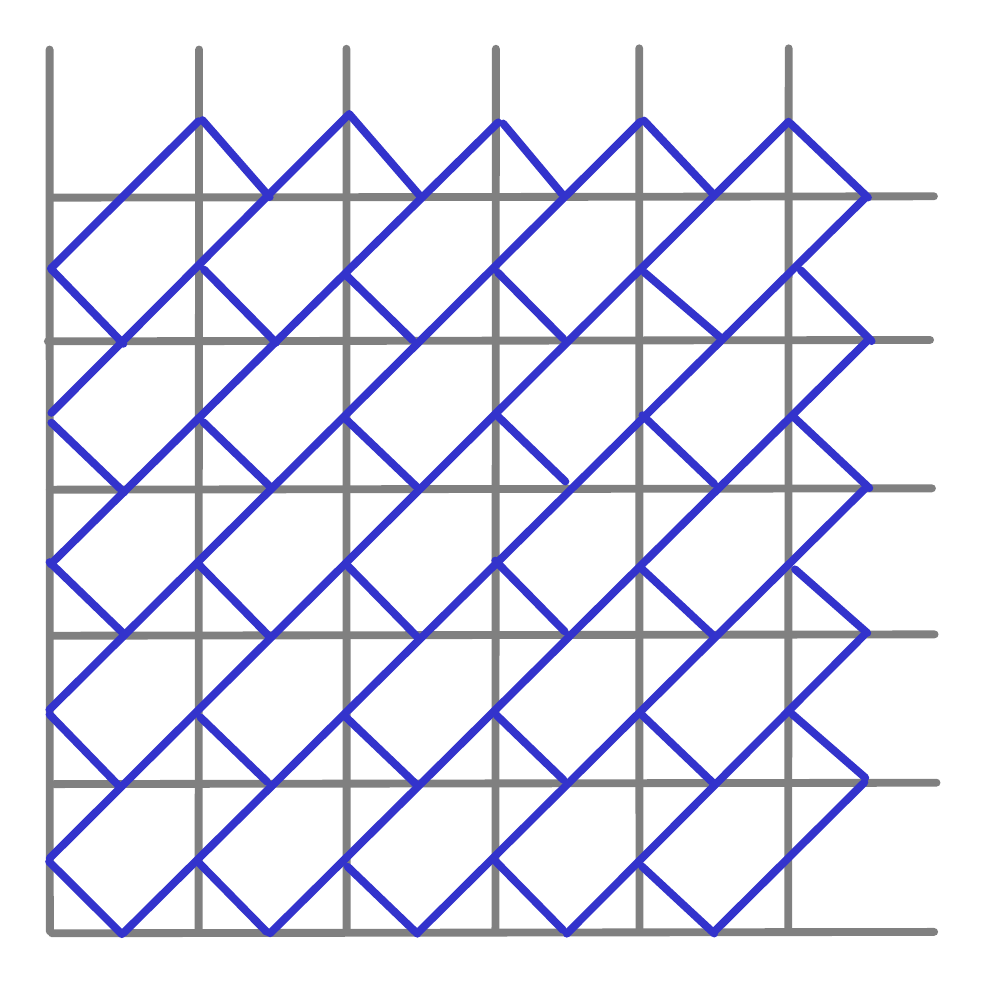}}
\caption{\label{fig:qd} Square lattice on which a quantum double model is defined is shown in grey. The corresponding hexagonal code lattice is shown in blue. Spins are located on the edges of the former, and vertices of the latter.}
\end{center}
\end{figure}

\subsection{Clusters and Chunks}

We use $E$ to denote the set of errors that occur, including both spin and measurement errors. This is therefore a set of edges on the syndrome lattice. We use $S=S(E)$ to denote the corresponding set of non-trivial syndrome elements, which is a set of syndrome lattice vertices. Here $S(E)$ refers to 3D syndrome based on changes in measurement results, rather than the measured anyon occupancies themselves.

Any subset of the vertices of the syndrome lattice is called a cluster. Typically the clusters considered are those for which all vertices are occupied by an element of $S$. When this is not true, the cluster is called a  vertex cluster.

Any subset of the edges of the syndrome lattice is called a chunk. We will only consider chunks made up of edges associated with an element of $E$. Chunks are therefore also subsets of $E$.

The decoders we consider use the distance between non-trivial syndrome elements to determine how to best correct the errors that caused them. A sensible choice for the distance $d(k,k')$ between two vertices $k$ and $k'$ of the syndrome lattice is therefore the minimum number of edges required to connect them. This will be the metric that we primarily consider. However, any metric for which all distances are integers could also be used.

Two clusters, $C$ and $C'$, are said to overlap if there exists $k_1, k_2 \in C$ and a $k' \in C'$ such that $d(k_1,k') \leq d(k_1,k_2)$. The cluster $C'$ is said to be inside $C$ if the above is true for all $k' \in C'$.

The width of a cluster, $C$, is defined to be the distance between its extremal points,
\beq
W(C) = \max_{k,k' \in C} d(k,k').
\eeq
The distance between two overlapping clusters is defined to be zero. For non-overlapping clusters it is the distance between their closest points
\beq
d(C,C') = \min_{k\in C,k' \in C'} d(k,k').
\eeq
Note that this distance does not satisfy the triangle inequality. If clusters $C$ and $C'$ are separated by a finite distance, but both overlap with a cluster $C''$, then $d(C,C')>d(C,C'')+d(C'',C')$. This fact will not present a problem for the proof, but should be kept in mind.

To define widths and distances for a chunk $\varepsilon$, we consider the vertex cluster $C(\varepsilon)$ composed of all vertices adjacent to elements of the chunk. The width of the chunk is then defined to be the width of $C(\varepsilon)$. The distance between two chunks is the distance between their corresponding vertex clusters.

A cluster, $C$, is called neutral if there exists a chunk $\epsilon$ such that $C = C(\epsilon)$. Note that this $\epsilon$ need not be present in the actual error, $E$. The neutrality of $C$ simple means that it is possible for it to have been created by some set of errors without otherwise affecting the syndrome. This means that it is also possible to correct the cluster independently of the rest of the syndrome. Finding neutral clusters is therefore an important part of decoding.

We call a chunk, $\varepsilon$, `disconnected' if it generates its own syndrome cluster that does not depend on the rest of $E$, i.e.
\beq
S(E \setminus \varepsilon) \cup S(\varepsilon) = S, \,\,\, S(E \setminus \varepsilon) \cap S(\varepsilon) = \emptyset.
\eeq
A sufficient condition for this is clearly that the vertex clusters $C(\varepsilon)$ and $C(E \setminus \varepsilon)$ are disjoint, and so $d(\varepsilon, E \setminus \varepsilon) \geq 1$. Note that the syndrome cluster $S(\varepsilon)$ created by a disconnected chunk will be neutral by definition.

\subsection{Error Model}

To continue with our analysis, the error model must be specified. As stated earlier, we assume that the errors associated with each edge of the syndrome lattice occur with an independent probability distribution. This requires there to be no correlations between errors on different spins, and no correlations between charge and flux errors on each spin for quantum double models. However, it will be allowed for the error probability to depend on the occupancy of the two plaquettes adjacent each the spin. This occurs when there is an energy gap for anyon creation, for example.

For the measurement errors, we consider a model in which the measurement simply reports an incorrect value. This is the simplest model that allows us to study the nature of decoding when measurement errors are present, and is often used for benchmarking. More realistically one should consider all elements of the process, such as a quantum circuit, performing the measurements and include realistic errors in each. However, since this will be very specific to each individual code, it is not compatible with our general approach.

We will quantify the strength of the noise using an upper bound on the probability that any kind of error will occur. Let us first consider this for the case of a charge error occurring on a spin during the time between two measurement rounds. Using $j$ to denote a possible error type for a spin and $k_P$ and $k_{P'}$ to denote the occupancies of the adjacent plaquettes, we define $p_z$ to be
\beq
p_z = \max_{k_P, k_{P'}} \sum_j {\rm Prob} (j | k_P, k_{P'}).
\eeq
It is therefore the total probability that an error of any kind will occur, for the anyon occupancies for which an error is most likely. The corresponding probability, $p_x$, for flux errors is defined in the same way. The maximum probability for any kind of error associated with any space-like edge of the syndrome lattice is then $p_s = p_x + p_z - p_x p_z$.

The probability for measurement errors, and hence time-like edges is
\beq
p_m = \max_{k} \sum_{j \neq k} {\rm Prob} (j | k).
\eeq
Here $j$ denotes a possible outcome reported by the measurement, and $k$ denotes the true value. The probability $p_m$ is therefore the total probability that the measurement reports any wrong value, for the true value for which an error is most likely.

We now combine this error rates into a single value $p = \max (p_s,p_m)$. This is an upper bound for the error probability for any kind of error event associated with any edge of the syndrome lattice.

\subsection{Chunk decomposition}

Let us now follow \cite{bravyi:13} by using the concept of level-$n$ chunks. The definition of these depends upon a constant $Q$  whose value can be chosen arbitrarily.

A level-$0$ chunk is defined to be a single error. A level-$n$ chunk is a union of two disjoint level-$(n-1)$ chunks such that the width is at most $Q^n$. A level-$n$ chunk therefore contains exactly $2^n$ errors.

We use $E_n$ to denote the union of all possible level-$n$ chunks. Note that this is not a disjoint union: the same errors could be involved in multiple possible level-$n$ chunks. Clearly $E = E_0$, and
\beq
E_0 \supseteq E_1 \supseteq \ldots \supseteq E_m.
\eeq
Here $m$ is the highest level for which $E_m \neq \emptyset$, given the error $E$.

It is useful to reflect upon the meaning of the sets $E_n$. For the following two lists, `within a distance' is used to mean `such that the union has a width no greater than'.
\begin{itemize}
\item $E_0$ is the set of all errors.
\item $E_1$ is the set of all errors within a distance $Q$ of another.
\item $E_2$ is the set of all errors within a distance $Q$ of another for which there is another such pair within a distance $Q^2$.
\item $E_3$ is the set of all errors within a distance $  Q$ of another, for which there is another such pair within a distance $  Q^2$, for which there is another such quadruple with a distance $  Q^3$.
\item $\ldots$
\end{itemize}

Using the sets $E_n$, we define the sets $F_n = E_n \setminus E_{n+1}$. These are the errors that form part of $E_n$ but not $E_{n+1}$, so
\beq
E = F_0 \cup F_1 \cup \ldots \cup F_m.
\eeq
This is a disjoint union, which is called the `chunk decomposition' of $E$. Again it is useful to reflect upon the meanings of these sets.
\begin{itemize}
\item $F_0$ is the set of all errors further than $  Q$ from any other.
\item $F_1$ is the set of all errors for within a distance $  Q$ of another for which there is no other such pair within a distance $  Q^2$.
\item $F_2$ is the set of all errors for within a distance $  Q$ of another for which there is another such pair within a distance $  Q^2$ but no other such quadruple within a distance $  Q^3$.
\item $\ldots$
\end{itemize}

\section{Greedy HDRG decoders}

Decoders based on greedy algorithms, in which syndrome elements attempt to neutralize themselves with near neighbours without considering the rest of the syndrome, will typically lead to a logical error rate that decays exponentially with $L^\beta$ for $\beta<1$. This is less than the optimal $\beta=1$ scaling, and is due to greedy algorithms being fooled by Cantor like error chains \cite{wootton:simple,hutter:improved}. However, such algorithms do typically have nice properties for analytical treatment. Specifically, any neutral cluster that is sufficiently far from the rest of the syndrome will typically be corrected independently of the rest.

Let us make this more rigorous. Decoders take a syndrome $S$ as an input and yield a correction operator $E_c(S)$ as an output. A cluster $C$ is called `independent' if
\beq
E_c(S) = E_c(C) \times E_c(S \setminus C).
\eeq 
Note that here $E_c(S)$ is an operator acting on the Hilbert space of the code, and so the multiplication should be interpreted accordingly.

A disconnected chunk is similarly called independent if its syndrome cluster $S(\varepsilon)$ is independent. Note that since an independent cluster is disconnected by definition, and a disconnected cluster is neutral by definition, independent clusters will always be neutral.

Greedy HDRG decoders are then defined such that the following two properties hold.

\begin{definition} \label{prop:1}

For an independent chunk of width $W$, the width of the correction operator is no greater than $W+O(1)$.

\end{definition}

\begin{definition} \label{prop:2}

Any chunk $\varepsilon$ of width $W$ is independent as long as there is a distance of greater than $\lambda \, W/2$ from it to $E \setminus \varepsilon$. Here $\lambda$ is a decoder dependent constant.

\end{definition}

Such decoders have recently been considered in References \cite{bravyi:13,wootton:simple,wootton:error,brell:14,anwar:14,hutter:improved,watson:15}.

\section{Threshold proof for greedy decoders}

A decoder is only truly useful for fault-tolerance if there exists a threshold $p_c$ such that the probability of a logical error vanishes for $p<p_c$ and $L \rightarrow \infty$. The nature of the decay with $L$ is also important. Here we prove bounds for these for any decoder of the type described above. Here we formulate the proof in a way that can be applied to both the single shot and active error correction problems. The only difference is the dimension of the syndrome lattice, with $D=2$ for the former case and $D=3$ for the latter.

For the proof we require a value of $Q$ such that the following holds true. For any $u \in F_n$, let $\varepsilon$ denote the chunk composed of all errors no further than $Q^n$ from $u$. For any $v \in E_n$ we then require that either:
\begin{itemize}
\item $v \in \varepsilon$;
\item $v$ is further than $\lambda Q^n$ from any element of $\varepsilon$.
\end{itemize}
A necessary and sufficient condition for the former is $d(u,v) \leq Q^n$. For the latter, the condition $d(u,v) > (\lambda + 1) Q^n$ is sufficient. We will define $Q$ such that both of these will always hold.

\begin{lemma} \label{lemma:1}

For any $u \in F_n$ there is no $v \in E_n$ that satisfies
\beq \label{eqn:distcond}
Q^n < d(u,v) \leq (\lambda + 1) Q^n .
\eeq
as long as $Q \geq \lambda+3$.

\end{lemma}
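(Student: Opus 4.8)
The plan is to argue by contradiction, exploiting the identity $F_n=E_n\setminus E_{n+1}$: if $u\in F_n$ admitted a witness $v\in E_n$ in the forbidden range $Q^n<d(u,v)\le(\lambda+1)Q^n$, then I would build from $u$ and $v$ a level-$(n+1)$ chunk containing $u$, contradicting $u\notin E_{n+1}$. Since $u,v\in E_n$, each lies in some level-$n$ chunk; fix one of each, $A\ni u$ and $B\ni v$, so that $W(A)\le Q^n$ and $W(B)\le Q^n$ by definition of a level-$n$ chunk.

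The heart of the argument is a width bound for $A\cup B$. Pick vertices $a^\star\in C(u)\subseteq C(A)$ and $b^\star\in C(v)\subseteq C(B)$ realizing $d(u,v)$, and apply the triangle inequality for the vertex metric $d$ — note that it is the inter-cluster distance, not $d$ itself, that fails the triangle inequality. Then any two vertices $k,k'$ of $C(A\cup B)=C(A)\cup C(B)$ obey $d(k,k')\le d(k,a^\star)+d(a^\star,b^\star)+d(b^\star,k')\le W(A)+d(u,v)+W(B)$, so, using $W(A),W(B)\le Q^n$ together with $d(u,v)\le(\lambda+1)Q^n$,
\beq
W(A\cup B)\le W(A)+d(u,v)+W(B)\le(\lambda+3)Q^n .
\eeq
The hypothesis $Q\ge\lambda+3$ now gives $W(A\cup B)\le Q\cdot Q^n=Q^{n+1}$, so $A\cup B$ is a level-$(n+1)$ chunk containing $u$. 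Hence $u\in E_{n+1}$, contradicting $u\in F_n$, and the lemma follows.

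It is worth recording where the two ends of the distance condition enter. The upper bound $d(u,v)\le(\lambda+1)Q^n$ is precisely what makes the displayed width estimate close. The lower bound $d(u,v)>Q^n\ge W(A)$ is used only to guarantee that $v$ lies outside every level-$n$ chunk through $u$, so that the merged object genuinely sits one level higher in the hierarchy rather than being a chunk already counted in $E_n$.

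The step I expect to need the most care is the last one: checking that $A\cup B$ truly meets the definition of a level-$(n+1)$ chunk, which is built from two \emph{disjoint} level-$n$ chunks. When $d(u,v)>2Q^n$ this is automatic, since an overlap between a level-$n$ chunk through $u$ and one through $v$ would, by the triangle inequality, force $d(u,v)\le 2Q^n$. In the residual window $Q^n<d(u,v)\le 2Q^n$, however, $A$ and $B$ may share errors even though $v\notin A$, and one must either choose the two constituent chunks disjoint or show that such a shared-error configuration itself already places $u$ in $E_{n+1}$; this is the part of the proof I would scrutinise most, the rest being routine bookkeeping with chunk widths.
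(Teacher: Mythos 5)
Your route is the same as the paper's: merge a level-$n$ chunk containing $u$ with one containing $v$, bound the width of the union by $W(A)+W(B)+d(u,v)\le (\lambda+3)Q^n\le Q^{n+1}$, and conclude $u\in E_{n+1}$, contradicting $u\in F_n$. Up to notation ($C_u,C_v$ versus $A,B$) this is exactly the argument given in the paper.

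The caveat you raise at the end --- disjointness of the two constituents when $Q^n<d(u,v)\le 2Q^n$ --- is the one genuine issue, and you are right to single it out: the paper does not close it either. The paper asserts that $d(u,v)>Q^n$ already forces $C_u$ and $C_v$ to be disjoint, but that inference only rules out $C_u=C_v$; if the two chunks share an error $w$, the vertex-level triangle inequality through the endpoints of $w$ gives merely $d(u,v)\le W(C_u)+W(C_v)\le 2Q^n$, exactly the window you identify. Moreover, the repair you suggest (``choose the constituents disjoint, or show the shared-error configuration already places $u$ in $E_{n+1}$'') is not always available. Take $n=1$ and three collinear errors $u,w,v$ whose endpoint vertices sit at $\{0,1\}$, $\{Q-1,Q\}$ and $\{2Q-2,2Q-1\}$ respectively. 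Then $\{u,w\}$ and $\{w,v\}$ each have width exactly $Q$ and are the only level-$1$ chunks ($\{u,v\}$ has width $2Q-1>Q$); since they share $w$, no two disjoint level-$1$ chunks exist, so $E_2=\emptyset$ and $u\in F_1$, while $v\in E_1$ with $d(u,v)=2Q-3$, which lies in the forbidden range of Eq.~(\ref{eqn:distcond}) for any $\lambda\ge 1$ and $Q\ge\lambda+3$. So under a literal reading of the definitions (a level-$n$ chunk is a union of two \emph{disjoint} level-$(n-1)$ chunks, hence has exactly $2^n$ errors) the statement itself can fail in the window $Q^n<d(u,v)\le 2Q^n$; closing your flagged gap is therefore not routine bookkeeping but requires amending the statement and its use --- e.g.\ raising the lower bound in Eq.~(\ref{eqn:distcond}) to $2Q^n$, correspondingly enlarging the chunk $\varepsilon$ in Lemma~\ref{lemma:2} to all errors within $2Q^n$ of $u$, and re-deriving the constants in the threshold bound --- or supplying an additional argument for shared constituents that neither your write-up nor the paper contains.
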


\begin{proof}

Let us consider a pair of errors $u,v \in E_n$ that \emph{do} satisfy Eq.~\ref{eqn:distcond}. Since both errors  are in $E_n$, both are contained within level-$n$ chunks. Let us denote these $C_u$ and $C_v$, respectively. Since chunks must have a width no greater than $Q^n$ by definition, the condition that $d(u,v) > Q^n$ means that $C_u$ and $C_v$ must be disjoint chunks.

Despite the non-applicability of the triangle inequality, the width of the combined chunk $C_u \cup C_v$ will clearly satisfy
\beq
W(C_u \cup C_v) \leq W(C_u) + W(C_v) + d(u,v).
\eeq
Again using the width restriction, as well as the condition that $d(u,v) \leq (\lambda + 1) Q^n$, we find
\beq
W(C_u \cup C_v) \leq 2 Q^n + (\lambda + 1) Q^n .
\eeq
The combined chunk will form a valid level-$(n+1)$ chunk if its width is no greater than $Q^{n+1}$. Clearly this will be satisfied for all $Q \geq \lambda + 3$. Since both $u$ and $v$ will be contained within a level-$(n+1)$ chunk in this case, neither will be an element of $F_n$. It therefore follows that, whenever either $u$ or $v$ is an element of $F_n$, Eq.~\ref{eqn:distcond} cannot hold.

\end{proof}

With the chunk decomposition so defined, it can allow us to easily identify independent chunks of errors.

\begin{lemma} \label{lemma:2}

For any error $u$ and the corresponding set $F_n$, let $\varepsilon$ denote the chunk composed of all errors no further than $Q^n$ from $u$. For $Q \geq \lambda+3$, all such $\varepsilon$ will be independent for any decoder that satisfies Property \ref{prop:2}.

\end{lemma}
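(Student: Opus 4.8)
The plan is to verify the hypothesis of Property~\ref{prop:2} for $\varepsilon$ directly: bound $W(\varepsilon)$ from above, bound the distance from $\varepsilon$ to the rest of the relevant error from below, and check that the latter exceeds $\lambda W(\varepsilon)/2$.

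First I would bound the width. Every error of $\varepsilon$ lies within $Q^n$ of $u$, so any two of them lie within $2Q^n$ of each other and $W(\varepsilon)\le 2Q^n$ (up to the fixed $O(1)$ coming from the bounded diameter of the vertex cluster of a single edge). Hence it suffices to show that every error of $E_n$ not already contained in $\varepsilon$ lies at distance greater than $\lambda Q^n$ from $\varepsilon$. This is exactly where Lemma~\ref{lemma:1} enters: since $u\in F_n$ and $Q\ge\lambda+3$, no $v\in E_n$ can satisfy $Q^n < d(u,v)\le(\lambda+1)Q^n$. By construction $\varepsilon$ consists precisely of the errors with $d(u,v)\le Q^n$, so any $v\in E_n$ outside $\varepsilon$ must in fact have $d(u,v)>(\lambda+1)Q^n$. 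For each $w\in\varepsilon$ we have $d(u,w)\le Q^n$, whence $d(v,w)\ge d(u,v)-d(u,w)>\lambda Q^n$. Minimising over $w\in\varepsilon$ and over the eligible $v$ gives a separation greater than $\lambda Q^n\ge \lambda W(\varepsilon)/2$, so Property~\ref{prop:2} yields that $\varepsilon$ is independent.

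The step needing the most care is the reverse-triangle bound $d(v,w)\ge d(u,v)-d(u,w)$: as the excerpt stresses, the cluster/chunk distance does not obey the triangle inequality, so this must be carried out at the level of the underlying syndrome-lattice vertices, where $d$ is a genuine integer metric, and the constant-size discrepancy between an edge and its two endpoints absorbed into the slack that the choice $Q\ge\lambda+3$ was engineered to provide. One should also keep in mind that the separation here is measured against $E_n$ rather than the full error $E$: errors belonging to lower levels $F_0,\dots,F_{n-1}$ may sit close to $\varepsilon$, but in the level-by-level application of the decoder these have already been removed by the time level $n$ is processed, so they do not spoil independence at this stage. Assembling these pieces completes the proof.
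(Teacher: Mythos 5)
Your proposal is correct and follows essentially the same route as the paper: bound $W(\varepsilon)\le 2Q^n$, invoke Lemma~\ref{lemma:1} to get a separation greater than $\lambda Q^n \ge \lambda W/2$ from every element of $E_n$ outside $\varepsilon$, and dispose of nearby lower-level errors by noting they are handled at earlier levels. The only difference is that the paper makes this last step an explicit induction (chunks around $F_0$ elements are independent, so the decoder acts on the remainder as if the error were $E_1$, then likewise for $F_1$, and so on), whereas you compress it into one sentence about ``level-by-level'' removal; writing out that induction is the only polish your argument needs.
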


\begin{proof}

Clearly the maximum width of any such chunk is $W \leq 2Q^n$. By Lemma \ref{lemma:1} we know that such chunks are a distance of at least $\lambda Q^n \geq \lambda W/2 $ from any other element of $E_n$. Any errors within this distance must therefore be elements the sets $E_{n'}$ for lower levels $n'<n$.

There are no lower levels than $E_0$, so let us proceed by induction. Any such $\varepsilon$ based around a $u \in F_0$ will have no errors within a distance $\lambda W$, and so will be independent for any decoder that satisfies Property \ref{prop:2}. Since all errors in $F_0$ will be corrected independently, the decoder will treat the remaining errors in the same way as if the original error was $E \setminus F_0 = E_1$.

Similarly, none of the remaining errors $E_1$ will be within a distance $\lambda W/2$ of any $\epsilon$ based around a $u \in F_1$. All errors in $F_1$ are therefore also corrected independently, and the decoder act on the remaining errors as if the original error was $E \setminus F_0 \setminus F_1 = E_2$. Continuing this process, we find that all $\varepsilon(u)$ for $u \in F_n$ are independent chunks, as required.

\end{proof}

The chunk decomposition therefore forms a decomposition of the errors into independently correctable chunks. This allows us to identify those errors that will cause the decoder to fail.

\begin{lemma}

A necessary condition for a logical error is that the highest level in the chunk decomposition satisfies $m \geq \gamma \log(L/2)/\log(Q)$.

\end{lemma}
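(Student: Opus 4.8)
The plan is to show the contrapositive: if the highest level $m$ in the chunk decomposition is small, namely $m < \gamma \log(L/2)/\log(Q)$, then no logical error occurs. The strategy rests on Lemma~\ref{lemma:2}, which tells us that the error $E$ decomposes into independently correctable chunks $\varepsilon(u)$ for $u \in F_n$, each of width at most $2Q^n$. By Property~\ref{prop:1}, the correction operator applied to each such chunk has width at most $2Q^n + O(1)$. Since each chunk is corrected independently, the combined object (error chunk together with its correction) that is supported near any $u \in F_n$ is a closed loop on the syndrome lattice of diameter at most $2Q^n + 2Q^n + O(1) = O(Q^n)$ — here I use that $S(\varepsilon(u))$ is neutral, so error plus correction form a cycle. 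A logical error requires that error plus total correction contain a homologically nontrivial cycle, which on the code/syndrome lattice must have diameter at least of order the code distance $L$ (up to a constant factor; this is where the $L/2$ enters).

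The key steps, in order, would be: (i) invoke Lemma~\ref{lemma:2} to write $E = \bigcup_n \bigcup_{u \in F_n} \varepsilon(u)$ as a union of independent chunks, noting each has width $\le 2Q^n$; (ii) apply Property~\ref{prop:1} to bound the width of the correction on each chunk by $2Q^n + O(1)$, so that each "error + local correction" cluster has width bounded by $c\,Q^n$ for a constant $c$ and for $Q^n$ large enough; (iii) observe that because the chunks are corrected independently, a logical error can only arise if one of these combined error-plus-correction clusters is itself homologically nontrivial (a disjoint union of locally-contractible cycles is contractible); (iv) a nontrivial cycle must have width at least $L/2$ (or $L'/2$ in the variant cases), hence we need $c\,Q^m \ge L/2$, i.e. $Q^m \ge L/(2c)$, which after taking logarithms gives $m \ge \log(L/2)/\log(Q) - \log(c)/\log(Q)$; absorbing the constant into $\gamma$ yields the claimed bound $m \ge \gamma \log(L/2)/\log(Q)$.

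I expect the main obstacle to be step~(iii): making precise why a collection of independently-corrected neutral chunks cannot collectively produce a logical error unless one of them individually spans the system. The point is that "independent" means the correction factorizes, $E_c(S) = \prod_n \prod_{u} E_c(S(\varepsilon(u)))$, and each factor $E_c(S(\varepsilon(u))) \times \varepsilon(u)$ is a syndrome-free operator (a cycle) supported within a region of width $O(Q^n)$ — hence contractible when $O(Q^n) < L/2$. The total residual operator $E \times E_c(S)$ is then a product of contractible cycles, which is itself trivial in homology, so acts trivially on the logical subspace. One must also handle the overlap/double-counting in the union $\bigcup E_n$ carefully, but Lemma~\ref{lemma:2}'s inductive structure already arranges the $F_n$ into a genuine disjoint decomposition, so each error lies in exactly one chunk $\varepsilon(u)$ and the factorization is well-defined. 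The constant $\gamma$ will emerge as $\gamma = 1$ up to the additive correction from the $O(1)$ terms and the constant $c$, which is why the statement is phrased with an unspecified $\gamma$ rather than $\gamma = 1$.
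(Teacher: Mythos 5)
Your proposal is correct and follows essentially the same route as the paper: invoke Lemma~\ref{lemma:2} to reduce to independently corrected chunks, use Property~\ref{prop:1} to bound each correction's width by $O(Q^n)$, and note that a logical error forces some chunk-plus-correction to span a width of order $L$, giving $Q^m \gtrsim L/2$ and hence the stated bound on $m$. The only difference is that you spell out explicitly (via contractibility of each neutral error-plus-correction cycle) the step the paper states in one line, namely that independently corrected small chunks cannot jointly produce a logical error.
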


\begin{proof}

By Property \ref{prop:1}, any independent chunk is neutral and so can be corrected by a operator whose width is (essentially) no greater than that of the chunk. For an independent chunk to cause a logical error, its correction operator must have a width as large as $L$, the code distance. This requires independent chunks with width $W=2Q^n \geq L$. The lowest value of $m$ for which these can occur is $m \geq \log(L/2)/\log(Q)$, giving the required result.
\end{proof}

Now we can analyse the probability that a level-$m$ chunk arises, for a given $m$. For this, consider the $D$-dimensional boxes $\Sigma_n$ and $\Sigma_n^+$, centred on the same point. The former is sufficiently large to contain a chunk of width $Q^n$, and the latter can contain one of width $3Q^n$. Using these, consider the following events.
\begin{itemize}

\item $A_n$ : $\Sigma_n$ contains at least part of a level-$n$ chunk.

\item $B_n$ : $\Sigma_n^+$ contains a level-$n$ chunk.

\item $C_n$ : $\Sigma_n^+$ contains a level-$(n-1)$ chunk.

\end{itemize}

Due to the width restriction on chunks, $A_n$ is a sufficient condition for $B_n$. Their probabilities are therefore related by
\beq
P_{B_n} \geq P_{A_n} .
\eeq
Note that $B_n$ requires $\Sigma_n^+$ to contain two disjoint level-$(n-1)$ chunks. Two independent occurrences of event $C_n$ are a necessary condition for this, so
\beq
P_{B_n} \leq P_{C_n}^2.
\eeq
Note that $\Sigma_n^+$ is composed of $q=(3Q)^{D}$ disjoint boxes $\Sigma_{n-1}$. The event $A_{n-1}$ on at least one $\Sigma_{n-1}$ is therefore a necessary condition for the event $C_n$ on $\Sigma_n^+$, and so
\beq
P_{C_n} \leq q P_{A_{n-1}}.
\eeq
Putting this all together, we obtain the recursive relation
\beq
P_{A_n} \leq ( q P_{A_{n-1}} )^2.
\eeq
The event $P_{A_0}$ is that of a single error, which is upper bounded by $p$ by definition. Repeatedly applying the recursive relation then allows us to express the probability of a level-$m$ chunk in terms of $p$,
\beq
P_{A_m} \leq (q^2 p)^{2^m}.
\eeq

For the single shot case, as well as that of active error correction when $T=L$, we consider a syndrome of size $L^D$. Since a chunk of size $m \geq \log(L/2)/\log(Q)$ is a necessary condition for a logical error, the logical error rate is upper bounded by,
\beq \label{eq:log}
P \leq \, \left[(3Q)^{2 D} p \right]^{(L/2)^\beta}, \,\,\,\, \beta \geq \frac{1}{\log_2 Q}.
\eeq
Note that $P$ decays exponentially in $(L/2)^\beta$ when
\beq
p < (3Q)^{- 2 D}.
\eeq
This therefore gives a lower bound on the threshold, $p_c$, for this decoding problem and decoder.

For $T>L$, a necessary condition for a logical error is for a chunk of size $m \geq \log(L/2)/\log(Q)$ to intersect at least one of the $(T/L)$ boxes of size $L \times L \times L$ that make up the $L \times L \times T$ syndrome. The probability for this will clearly share the exponential factor of Eq.~\ref{eq:log}. The same threshold applies therefore applies for arbitrary $T>L$.

The combined bounds are then
\beq \label{eq:thresh}
p_c \geq (3Q)^{-2 D}, \,\,\, \beta \geq \frac{1}{\log_2 Q}
\eeq
Note that the threshold and the exponent $\beta$ both depend on $Q$.

\section{Application to Abelian models}

Let us now consider the specific case of a quantum double model is based on an Abelian group. The results of the syndrome measurements can therefore be interpreted in terms of Abelian anyons \cite{double}. Any finite Abelian group is a product of cyclic groups $Z_d$. The resulting quantum double model is then the corresponding tensor product of the models based on each of these factors. As such we restrict to cyclic groups without loss of generality.

The way to analyse changes in the measured syndrome in order to perform active error correction is well known for these models ~\cite{duclos:14,hutter:improved,watson:15}. Nevertheless, we explain it here in detail.

We specifically consider error correction for the case in which the logical information is being stored in the code, and not manipulated. As such, though syndrome readout is being performed constantly, error correction can be delayed until readout.

The quantum double model $D(Z_n)$ has $n^2$ different species of anyons that can live in each of the plaquettes, $P$. These can be denoted $e_g m_h$ for $g,h \in Z_d$, and have the fusion rules
\beq
e_g m_h \times e_{g'} m_{h'} = e_{g+g'} m_{h+h'}. 
\eeq
Here addition is taken modulo $n$. The anyon $e_0 m_0$ is identified with the vacuum.

Without errors, the syndrome measurements would never change. As such, changes in the measurement results are signatures of errors. Such changes will not necessarily occur adjacent to every error. Instead, they are found at the endpoints of error chains. The type of error chain that can terminate at any syndrome change depends on the nature of the change.

In order to correct the errors we must consider what error chains are consistent with the syndrome measurements. We therefore need to determine exactly what syndrome changes have occurred, where they occurred and when. The details of the changes can then be placed on the three dimensional syndrome lattice (with two dimensions for space and one for time). If the outcome of the measurement of $P$ at $t-1$ is $e_g m_h$, and that of the same plaquette at $t$ is $e_{g'} m_{h'}$, the corresponding vertex $(P,t)$ of the syndrome lattice is assigned the value $e_{g'-g} m_{h'-h}$. This gives the trivial value $e_0 m_0$ if the two results are the same, signalling that no error has been detected. Otherwise a non-trivial syndrome element is present at $(P,t)$. We refer to these as `defects'. Note the syndrome lattice of defects contains the same information as the list of all measurement results. However, it presents the information in a form that is more convenient for analysing error chains.

The first step towards determining a likely error chain for an Abelian model could be to choose defects that are likely joined by an error chain, and draw an error chain between them. Let us use $e_{g_1} m_{h_1}$ to denote the type of one of these defects, and $e_{g_2} m_{h_2}$ to denote that of the other. If
\beq
g_1 + g_2 = m_1 + m_2 = 0 \,\,\, \mod n,
\eeq
this pair of defects can be said to be neutral. This means that no further error chains are required to explain this pair of syndrome changes. Otherwise the pair is non-neutral. In this case we can cease to regard the two points as being defects individually. Instead they collectively make up a single defect, along with the error chain that connects them. This must be connected with error chains to further defects in order to be resolved. Once such a cluster of syndrome changes, $C$, satisfies
\beq
\sum_j g_j = \sum_j h_j = 0 \,\,\, \mod n,
\eeq

where $e_{g_j} m_{h_j}$ is the syndrome value for each $j \in C$, it can be said to be neutral. The set of error chains connecting the syndrome changes is then sufficient to explain their presence without being dependent on any other part of the syndrome.

Correction of a neutral cluster is done by moving its defects together. These obey the same fusion rules as the anyons, and so the effect of moving all the defects together is to annihilate them. Moving a defect along a time-like interval implies that the measurement results along the interval were incorrect. The movement is done by correcting the results by changing their values. Moving along a space-like interval implies that errors occurred on the spins along the interval, and is done by applying the inverse of the corresponding errors. In both cases the required operations are applied to edges of the syndrome lattice.

Note that the defects created by a chunk $\varepsilon$ are always inside $C(\varepsilon)$. The movement of defects by a decoder can always be implemented such that they always remain inside $C(\varepsilon)$. As such, we can always assume that the decoder satisfies Property \ref{prop:1}.

\subsection{Bravyi-Haah and ABCD Decoders}

The Bravyi-Haah decoder \cite{bravyi:13,watson:15} runs an iterative process to find neutral clusters. In the $n$th iteration, all defects within a distance $2^n$ of each other are placed in the same cluster. Neutral clusters are then identified and removed from the syndrome. If defects remain, the process is repeated for $n+1$.

The minimum $n$ required to cover an independent chunk $\varepsilon$ of width $W$ is $n = \left \lceil \log_2 W \right \rceil$. In order for no defects from $E \setminus \varepsilon$ to be included within the same cluster, they must be more than a distance $2^n$ away.

Note that the same $n$ covers distances from $2^{n-1}+1$ to $2^n$. The minimum $W$ that requires a distance of $2^n$ to other chunks is therefore $2^{n-1}+1$. This means that a chunk of width $W$ requires a distance of greater than $2(W-1)$ to be independent. This decoder is therefore a greedy HDRG decoder with $\lambda = 4$.

The ABCB decoder \cite{anwar:14} is based on the same principle as that above. However, the distance used for iteration $n$ is simply $n$ rather than $2^n$. As such $\lambda = 2$ for this decoder.

\subsection{Expanding Diamonds}

The expanding diamonds decoder \cite{dennis:thesis,wootton:simple} is based on a similar iterative process to the above. Initially, the syndrome is decomposed into clusters such that each defect corresponds to its own cluster. During iteration $n$, each cluster checks whether another exists at a distance $n$ away. If so, the clusters can be paired. Each pair is removed from the syndrome if neutral. Once no more pairs are possible for the distance $n$, the distance $n+1$ is considered.

The largest distance required for all defects within an independent cluster to see each other is its width. In order for none to see any defects outside the independent cluster before they become neutral, the distance to other defects simply needs to be greater than this. As such $\lambda = 2$ for this decoder.

\subsection{MWM based decoder}

This decoder is based on the graph theoretic problem of finding matchings \cite{hutter:improved}. Though it is an HDRG decoder, it is not greedy in general. Instead it uses techniques that perform optimzation of the correction operator over long ranges.

One such technique is the use of `shortcuts'. These are modifications made to the distances between clusters when any neutral cluster is removed. It is a modification that is also possible for the above decoders, and has been found to allow better decoding \cite{hutter:improved,wootton:simple}. However, for the applicability of the proof, we consider this decoder without the use of shortcuts.

This decoder uses a tunable parameter, $\Lambda$, that can vary between $0$ and $1$. For any given code, system size and noise model, $\Lambda$ can be set at whatever value gives the best results. Proving a threshold for any value of $\Lambda$ therefore proves it also for the decoder in general. The decoder can only be regarded as greedy for $\Lambda=0$, and so we focus on this.

For the case of $\Lambda=0$, this decoder works in a similar way to expanding diamonds, building clusters by pairing existing clusters. Clusters are only ever paired when they are mutual nearest neighbours, i.e. neither has a neighbour closer than the other.

This means that, like expanding diamonds, the largest distance required for all defects within an independent cluster to see each other is its width. In order for none to see any defects outside the independent cluster before they become neutral, the distance to other defects simply needs to be greater than this. As such $\lambda = 2$ for this decoder.

\section{Application to non-Abelian models}

We now consider models for which the syndrome can be interpreted in terms of non-Abelian anyons, such as quantum double models based on a non-Abelian group \cite{double}.

\subsection{Single shot error correction}

Single shot error correction of non-Abelian anyons has previously been studied numerically \cite{wootton:error,brell:14,hutter:improved,burton:15}. These studies provided evidence of a finite threshold, but no formal proof has yet been presented. However, such a proof follows immediately from the discussions of the Abelian case above.

In the single shot case, errors create an anyon configuration. To correct a chunk, the anyons it creates simply need to be fused to annihilate them. In general, moving a non-Abelian anyon requires a controlled operation on all the spins on which its syndrome operator has support. The size of the correction operation will therefore be slightly bigger for the non-Abelian case than the Abelian one to account for this. Specifically, moving the anyons generated by a chunk of width $W$ together requires a correction operator of width at most $W+2$. Greedy HDRG decoders will therefore certainly satisfy Property \ref{prop:1} for this case.

Decoders will also satisfy Property \ref{prop:2} for the single shot non-Abelian case in exactly the same way as for the Abelian. As long as the anyons within each chunk see each other before they see those of other chunks, they will mutually annihilate without affecting or being affected by the anyons of other chunks.

The non-trivial braiding of the non-Abelian anyons will not have any effect on either property. This can be simply seen by the same induction as in Lemma \ref{lemma:2}. Chunks centred around elements of $F_0$ are spatially separated from all others, and so the anyons created by such chunks will not have braided around those of others. Nor will the correction operator that fuses the anyons cause such braiding. The anyons for such such chunks will therefore still annihilate, even if the braiding causes changes in intermediate fusion results. Chunks centred around elements of $F_1$ will similarly remain independent of all remaining errors, and so on for higher levels.

Single shot non-Abelian decoding can therefore be performed by the decoders discussed above, and will have the same values of $\lambda$. The decoding will lead to the threshold noise rates of Eq. (\ref{eq:thresh}).

\subsection{Syndrome for active error correction}

Though there may be many types of anyon possible in any given anyon model, from henceforth we will not distinguish between them for the sake of simplicity. We are therefore only concerned with whether each position in the code is occupied or unoccupied by an anyon according to the measurement results. For the Fibonacci model, in which there is only one non-trivial anyon type, this is the most detailed case possible.

For codes with Abelian anyons, decoding can be postponed until final readout. Furthermore, errors can be corrected effectively by a basis change for the affected spins. This removes the need to physically apply correction operators. Unfortunately, non-Abelian codes share neither of these useful traits. Measurement of anyon occupation alone does not extract sufficient information for good decoding. Fusion of the anyons must also be performed continuously through the process \cite{wootton:error}. These attempted operations performed by the decoder must therefore be taken into account when interpreting the measurement results. Note also that this action will, in general, lead to higher error rates on the spins involved in the anyon transport. However, this can simply be incorporated into the maximum error rate $p_s$ for spins.

Using the measurement results we must construct the syndrome, which will be used as an input for the decoder. For the Abelian case, this was done by assigning a defect to the syndrome lattice wherever there is a change in this measured syndrome. This is because such points are necessarily the endpoints of error chains, and so can be used to determine a likely set of errors that could have caused the measured syndrome. The defects, and hence the anyons, are then removed using this as a guide. The same approach should be taken for the non-Abelian case: points that are necessarily the endpoints of error chains must be identified, and used to remove the anyons.

For the non-Abelian case, it is not true that the measurement result for a given plaquette will only change due to errors. Since the decoder needs to move anyons in order to fuse them, some changes will be expected. For example, if an anyon was measured at $(P,t-1)$ and then moved, it would be expected that no anyon would be measured at $(P,t)$. In fact, it would be unexpected if an anyon was measured at $(P,t)$, and so a lack of change would be the signature of an error in this case. The measurement results at $t-1$, along with the set of movements attempted by the decoder between $t-1$ and $t$, should then be used to determine the expected set of measurement results at $t$. Any point at which the measurement results at $t$ differ from this are a signature of an error, and so should be associated with a defect.

A plaquette $P$ is expected to be empty if it was empty at $t-1$ and no anyon was moved to it, or if it held an anyon at $t-1$ but it was moved away. It is expected to be full if it held an anyon at $t-1$ but no attempt was made to move it, or if it was empty and a single anyon was moved to it. There are two remaining cases, both of which correspond to fusion of anyons: at least one anyon is moved onto a plaquette holding another, or several are moved onto the same plaquette. In these cases there is no expectation either way, since either result could be due to fusion rather than the effects of adjacent errors. As such, no defect will be assigned to such a $(P,t)$.

If no attempts to move the anyons are ever made, a plaquette $P$ will be measured to hold an anyon from the time $t$ at which it was unexpectedly measured for the first time, and the time $t'$ at which it unexpectedly disappeared. A defect will be assigned to both $(P,t)$ and $(P,t')$, since these are necessarily endpoints of error chains. However, note that error chains for the non-Abelian case can terminate anywhere there is an anyon. Such errors chains will not necessarily change the anyon occupancy, and so cannot always be detected by a defect. The `world line' between the defects at $(P,t)$ and $(P,t')$ should therefore also be included in the syndrome given to the decoder. If an anyon is still present at the most recent time slice, the world lines will terminate on these present time anyons rather than defects.

When attempts to move anyons are made, the corresponding world lines should be dragged along with the intended movements. These will combine at fusion events, creating larger `world nets'.

The decoder must find a set of errors that could explain the configuration of defects on the syndrome lattice. In order to do this, error chains can be proposed which connect each defect to another point at which an error chain can end: another defect, a present time anyon or a net.

Any valid error chain corresponds to a proposal for how the missing portions of anyon world nets (corresponding to creation, movement, etc) should be filled in. Once such a chain has been proposed, the corresponding portion can then be added to the world net, and the related defects can be removed from the syndrome lattice. When the error chain connects a defect to a present time anyon, this anyon should also be removed from the syndrome. This is because such an error chain proposes that these apparent anyons are in fact due to measurement errors.

When proposing error chains, one must be careful to determine whether the resulting error net will have portions that are connected to the rest only by a single time-like world line. Such structures would imply that an anyon has been created from vacuum, violating their conservation laws. Such structures must therefore be avoided.

Note that the above is not necessarily true if some part of the structure braids around another world net. This is due to the effects of braiding non-Abelian anyons. However, even in this case, an error chain directly connecting the two nets would correspond to a simpler error, and so could be considered instead. Otherwise, two nets that are braided should be considered to be a single net, just as if they had been connected by an error chain.

When a world net does not contain defects, its neutrality can be considered. If the net contains no present time anyons, the proposed error chain is sufficient to explain the observed defects without otherwise affecting the syndrome. Such a net may therefore be considered to be neutral. If it terminates in multiple present time anyons, it could be that fusion of these will lead to annihilation, and hence neutrality. These anyons should therefore be moved together by the decoder to determine whether this is indeed the case. If the net terminates in a single present time anyon, at least one further error chain is required to explain the presence of this anyon. Such world nets are therefore not neutral, and should be considered to be defects themselves. Note that these defects can be paired with their own present time anyon, reflecting the possibility of a chain of measurement errors between the time of fusion and the present time, as well as being paired with other defects, nets or present time anyons.

For an error chain connecting a defect with a present time anyon, the removal of the defect from the syndrome should not be considered permanent. This is because further timeslices might show that this proposed error chain was, in fact, unlikely. Any such defects should be reinstated after the movement performed at each timeslice, along with the rest of their net.

\begin{figure}[t]
\begin{center}
{\includegraphics[width=8cm]{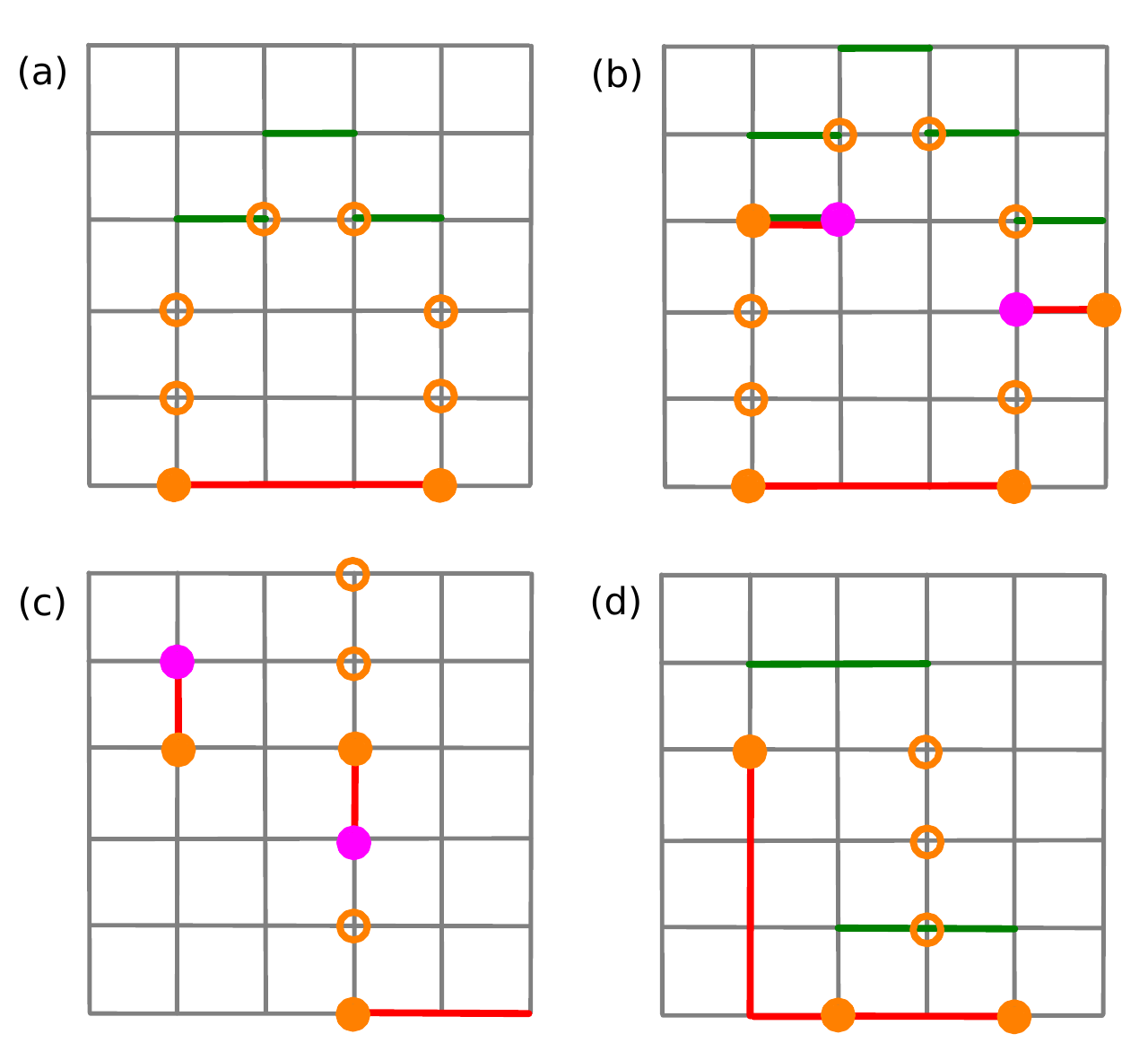}}
\caption{\label{fig:errors} Examples of errors, and how these may be dealt with by a decoder. Here only a one dimensional slice of the spatial plane is considered, represented horizontally. Time corresponds to the vertical direction, with time flowing in the upwards direction. Points at which anyons are measured are denoted by orange circles. These are filled if the anyon is unexpected, and so correspond to a defect. Points at which no anyon was measured when one was expected, another defect, are denoted by pink circles. Red lines denote errors (horizontal for spin and vertical for measurement) while green lines denote attempted anyon movement by the decoder. (a) A chain of three spin errors creates a pair of anyons. For a few time steps it is most likely that the anyons are simply due to measurement errors. After this they are most likely to have been created in a pair, and so are moved towards each other until annihilation. (b) Same as before, except that an additional spin error moves the right anyon. This creates an additional pair of defects. The first movement operation applied to the left anyon also fails, meaning that the anyon does not move as expected. This also creates an additional pair of defects. Nevertheless, these effects are accounted for, and the anyons are finally annihilated. (c) Two examples of measurement errors, one where no anyon is expected and one where one is. Both lead to a pair of defects. (d) A chain of spin errors create three anyons, though one is hidden for a time by measurement errors. The decoder first pairs those it can see, but they do not annihilate. The final anyon, once visible, is annihilated with the fusion product.}
\end{center}
\end{figure}

\subsection{Decoding for active error correction}

The syndrome for non-Abelian anyon models above is largely the same as that for Abelian ones. The main difference is the need to consider world lines at which error chains can end in an undetected manner. This difference does not prevent the greedy HDRG decoders discussed above from being straightforwardly applied to the non-Abelian case. As such, one might expect that the threshold proof straightforwardly applies also.

Unfortunately, this is not the case. The differences between the Abelian and non-Abelian decoding problems prevent any decoder from satisfying  Property \ref{prop:2}, even in general. This prevents the application of the proof, and demonstrates a significant difference between the decoding of Abelian and non-Abelian models.

To see why Property \ref{prop:2} does not hold, consider a chain of spin errors of length $l$, which create a pair of anyons located at the endpoints. Annihilating these requires moving them together. Assuming that each anyon can only be moved to a neighbouring plaquette in each time step, this means that the anyons will still be present for at least a time $l/2$ after their creation. If a single error occurs adjacent to one, it can cause it to move. This error is therefore certainly not independent of the string. However, the width of the single error is $W=1$, and the distance from it to any other error is can be up to $l/2$. Its lack of independence therefore implies $l/2 < \lambda/2$ for any $l = \Theta(L)$. As such $\lambda = \Theta(L)$, which contradicts Property \ref{prop:2}.

To remove this effect, one could assume that a non-Abelian decoder can move anyons arbitrarily far in each time step. Though an unphysical assumption, it could be used to make progress towards a threshold proof. Unfortunately, even this is not enough. Consider again the above chain of errors. If measurement errors are as likely as spin ones, until a time $l/2$ has passed it is more likely that the anyons are a result of measurement errors than the chain of $l$ spin errors. Proposed error chains will therefore pair them with their present time anyons, and so they will not be moved. They will therefore still present for at least a time $l/2$ after their creation, and so the above arguments apply even when anyons can be moved arbitrarily quickly.

Note that these issues do not imply a lack of threshold for error correction of non-Abelian anyons. Instead it simply shows that chunks that would be independent in Abelian codes can still interact with each other if the code is non-Abelian. However, this still requires them to be sufficiently close. A logical error would therefore require a collection of chunks that would cause these effects to percolate across the lattice. This percolation is likely to be highly suppressed for low enough $p$, and so a threshold proof for the non-Abelian case is likely to be possible. However, such a proof of the threshold theorem for non-Abelian anyons is yet to be found.

\section{Conclusions}

Here we have provided a very general threshold proof, applicable to both Abelian and non-Abelian decoding problems for general anyon models. The proof also applies to a general class of decoders. However, the threshold theorem for active error correction of non-Abelian anyons still remains to be proven. We have made contributions in this direction, by studying how active error correction may be performed in this case. Numerical and analytical verification of a threshold, either for general models and decoders or for specific cases, is left to future work.

\section{Acknowledgements}

JRW would like to thank Fern Watson for discussions of the proof in \cite{bravyi:13}. The authors acknowledge the SNF and QSIT for support.

\bibliography{refs}

\end{document}